\apptocmd{\thebibliography}{\setlength{\itemsep}{4pt}}{}{}
\let\bbordermatrix\bordermatrix
\patchcmd{\bbordermatrix}{8.75}{4.75}{}{}
\patchcmd{\bbordermatrix}{\left(}{\left[}{}{}
\patchcmd{\bbordermatrix}{\right)}{\right]}{}{}
\newcommand{\sr}{\stackrel}
\newcommand{\rar}{\rightarrow}
\newcommand{\tri}{\sr{\triangle}{=}}
\newcommand{\be}{\begin{equation}}
\newcommand{\ee}{\end{equation}}
\newcommand{\bea}{\begin{eqnarray}}
\newcommand{\eea}{\end{eqnarray}}
\newcommand{\bes}{\begin{eqnarray*}}
\newcommand{\ees}{\end{eqnarray*}}
\newcommand{\bce}{\begin{center}}
\newcommand{\ece}{\end{center}}
\newcommand{\beae}{\begin{IEEEeqnarray}{rCl}}
\newcommand{\eeae}{\end{IEEEeqnarray}}
\newcommand{\nms}{\IEEEeqnarraynumspace}
\newcommand{\bp}{\begin{problem}}
\newcommand{\ep}{\end{problem}}
\newcommand{\hso}{\hspace{.1in}}
\newcommand{\noi}{\noindent}
\newtheorem{problem}{Problem}
\newtheorem{theorem}{Theorem}
\newtheorem{remar}{Remark}
\newtheorem{definition}{Definition}
\begin{document}
%
\title{Capacity of Binary State Symmetric Channel with and without Feedback and Transmission Cost \vspace{-0.5cm}}
\author{\IEEEauthorblockN{Christos K. Kourtellaris}
\IEEEauthorblockA{Department of Electrical and Computer Engineering\\
Texas A{\&}M University at Qatar\\
Email: c.kourtellaris@tamu.edu}
\and
\IEEEauthorblockN{Charalambos D. Charalambous}
\IEEEauthorblockA{Department of Electrical and Computer Engineering\\
University of Cyprus\\
Email: chadcha@ucy.ac.cy}}


%


\maketitle

\begin{abstract}
We consider a unit memory channel, called Binary State Symmetric Channel (BSSC),
in which the channel state is the modulo2 addition of the
current channel input and the previous channel output. We derive closed form expressions for the capacity and  corresponding channel input distribution, of this BSSC with and without feedback and transmission cost. We also show that the capacity of the BSSC is not increased by feedback, and it is achieved by a first order symmetric Markov process. 

\end{abstract}


\section{Introduction}
Capacity of channels with feedback and associated coding theorems are often classified into Discrete Memoryless Channels (DMCs) and channels with memory. For DMCs with and without feedback,  coding theorems, capacity expressions, and achieving distributions are derived by 
Shannon \cite{shannon1956} and Dubrushin \cite{dubrushin1958}. Coding theorems for channels with memory  and feedback for stationary ergodic processes, directed information stable processes, and general nonstationary processes   are given in  \cite{kramer1998,tatikonda2000}. 
\par Although for several years great effort has been devoted to the study of channels with memory, with or without feedback, explicit or closed form expressions of  capacity for channels with memory, are limited to few but ripe cases. Some of these are the additive Gaussian noisy channels with memory and feedback\cite{ebert1970,cover-pombra1989} where the authors proved that memory can increase the capacity of channels with feedback, the first order moving average Gaussian noise channel \cite{kim06}, the trapdoor channel where it was shown the that feedback capacity is the log of the golden ratio \cite{permuter08}, and the Ising channel \cite{elishco}.

\par The capacity of the Unit Memory Channel (UMC) with feedback, defined by $\{P_{B_i|A_i,B_{i-1}}(b_i|a_i,b_{i-1}): i=0,1, \ldots\}$, where $b_i$ is the channel output and $a_i$ the channel input,   is investigated by Berger \cite{berger_shannon_lecture} and  Chen and Berger \cite{chen-berger2005}. Let $a^n\tri\{a_0, a_1, \ldots, a_n\}$ and similarly for $b^n$.  It is conjectured in  \cite{chen-berger2005} that the capacity achieving distribution has the property\footnote{The authors were not able to locate the derivation of (\ref{berger_1}); this property is verified in \cite{kourtellaris2014}.}  
\beae
P_{A_i|A^{i-1}, B^i}(a_i|a^{i-1}, b^{i-1})=P_{A_i|B_i}(a_i|b_{i-1}),  i=0,\ldots \nms \label{berger_1}
\eeae
\par Recently,   Asnani, Permuter and Weissman \cite{asnani13,asnani13j} obtained an expression of the capacity of the so-called Previous Output STate (POST) channel, which is a special case of  (\ref{berger_1}).  They  have shown that feedback does not increase the capacity of the POST channel,  among other results.

\par This paper is concerned with the  Binary State Symmetric channel $(BSSC)$ defined by  (\ref{gench}) with and without feedback and transmission cost.  
Our interest in the $BSSC$ is motivated by the desire to identify a {\it Duality of  Sources and channels}, in the sense of Joint Source-Channel Coding (JSCC) design, in which the optimal transmission is nonanticipative and the nonanticipative Rate Distortion function of a source with memory \cite{ppc2014isit,kourtellaris2014}  is matched to the capacity of a channel with feedback. With respect to this motivation, the  $BSSC$  and Binary Symmetric Markov Source (BSMS) are a natural generalization of   the JSCC design (uncoded transmission) of an Independent and Identically Distributed (IID) Bernoulli source over a Binary Symmetric Channel (BSC)  \cite{gastpar}. The duality of the $BSSC$  and BSMS source is discussed in the companion paper \cite{ksc2014_matching}, and utilizes  the results  of this paper. 
\par The main results derived in this paper are the  explicit expressions of the capacity and corresponding achieving channel input distribution of the $BSSC$, with and without feedback and transmission  cost.

 Since the POST channel \cite{asnani13,asnani13j}  and the BSSC defined by (\ref{gench}) are within a transformation equivalent, our results for the case  without transmission cost, compliment the results in \cite{asnani13,asnani13j}, in the sense that, we give  alternative direct derivations and  we obtain the expression of the capacity achieving channel input distribution with feedback. Moreover, we show that a Markov channel input distribution achieves the capacity of the channel when there is no feedback, hence feedback does not increase capacity of the BSSC.  Our capacity formulae highlights the optimal time sharing among two binary symmetric channels
(states of the general unit memory channel). The case with transmission cost is necessary for the JSCC design found in  \cite{ppc2014,kourtellaris2014}.


\section{Problem Formulation}
\label{cabistsych}
\par  In this section we present the optimization problems which correspond to  the capacity of channels with memory with and without feedback and transmission cost, and  discuss the special classes of UMCs and 
BSSCs. 

Let $\mathbb{N}^n\tri\{0, 1,2,\dots,n\}$, $n\in{\mathbb{N}}\tri\{0, 1,2,\dots\}$, ${\cal A},{\cal B}$ denote the channel input and output alphabets, respectively, and
${\cal A}^n\tri\times_{i=0}^{n}{\cal A}$ and ${\cal B}^n\tri\times_{i=0}^{n}{\cal B}$,
their product spaces, respectively. Moreover, let
$a^n\tri\{a_0, a_1,\dots, a_n\}\in{\cal A}^n$ denote the channel input sequence
of length $n+1$, and similarly for $b^n$.
We associate the above product spaces by their measurable spaces
$({\cal A}^n,\mathbb{B}({\cal A}^n))$, $({\cal B}^n,\mathbb{B}({\cal B}^n))$.

\begin{definition}({Channels with memory with and without feedback})\\
A channel with memory is a sequence of conditional
distributions $\{P_{B_i|B^{i-1},A^{i}}(d{b}_i|b^{i-1},a^{i}):i\in{\mathbb{N}}^n\}$ defined by
\bea
{\overrightarrow P}_{B^n|A^{n}}(d{b}^n|a^{n})\tri
\otimes_{i=0}^{n}P_{B_i|B^{i-1},A^{i}}(d{b}_i|b^{i-1},a^{i}). \nonumber 
\eea
The channel input distribution of a channel with feedback is a sequence of conditional
distributions $\{P_{A_i|A^{i-1},B^{i-1}}(d{a}_i|a^{i-1},b^{i-1}):i\in{\mathbb{N}}^n\}$
defined by
\beae
{\overleftarrow P}_{A^n|B^{n{-}1}}(d{a}^n|b^{n{-}1})\tri
\otimes_{i=0}^{n}P_{A_i|A^{i{-}1},B^{i{-}1}}(d{a}_i|a^{i{-}1},b^{i{-}1}).\nonumber\\ \nms
 \label{defin1id}
\eeae
The channel input distribution of a channel without feedback is a sequence of conditional
distributions $\{P_{A_i|A^{i-1}}(d{a}_i|a^{i-1}):i\in{\mathbb{N}}^n\}$ defined by
\bea
P_{A^n}(a^n)=\otimes_{i=0}^{n}P_{A_i|A^{i-1}}(d{a}_i|a^{i-1}).
\eea
\end{definition}

\begin{definition}(Transmission cost)\\
The cost of transmitting symbols over a channel with memory is 
 a measurable function $\gamma_{0,n}:{\cal A}^{n}\times{\cal B}^{n-1}\mapsto [0,\infty)$  defined by 
\vspace{-0.5cm} 
\bea
\gamma_{0,n}(a^n,b^{n-1})\tri\sum_{i=0}^{n}{c}_{0, i}(a^i,b^{i-1}).
\eea
The transmission cost constraint of a channel with feedback  is defined by 
\beae
{\cal P}_{0,n}^{fb}(\kappa) {\tri} \Big\{ {\overleftarrow P}_{A^n|B^{n-1}}{:} {\frac{1}{n{+}1}} {\bf E}\big\{\gamma_{0,n}(a^n,b^{n{-}1})\big\} \leq \kappa\Big\},\nms
\eeae
where $\kappa \in [0, \infty)$.\\
The transmission cost constraint of a channel without feedback  is defined by
\begin{align}
{\cal P}_{0,n}^{nfb}(\kappa) \tri \Big\{P_{A^n}: \frac{1}{n+1} {\bf E}\big\{\gamma_{0,n}(a^n,b^{n-1})\big\}\leq \kappa\Big\} .
\end{align}
\end{definition}
\noi Define the following quantities. 
\begin{align}
C_{0,n}^{fb}(\kappa) \tri &  \sup_{{\overleftarrow P}_{A^n|B^{n-1}}  \in {\cal P}_{0,n}^{fb}(\kappa)  }{\frac{1}{n+1}}I(A^n\rar B^n),  \\
C_{0,n}^{nfb}(\kappa) \tri & \sup_{P_{A^n}  \in {\cal P}_{0,n}^{nfb}(\kappa)  }{\frac{1}{n+1}}I(A^n; B^n).
\end{align}
where $I(A^n \rar B^n) \tri \sum_{i=0}^n I(A^i; B_i|B^{i-1})$. If there is no transmission cost the above expressions are denoted by $C_{0,n}^{fb}, C_{0,n}^{nfb}$. \\
\par Note that for a fixed channel distribution, ${\overrightarrow P}_{B^n|A^{n}}(d{b}^n|a^{n})$, the set of causal conditional distributions  ${\overleftarrow P}_{A^n|B^{n-1}}$ is convex, which implies  that 
 $I(A^n\rar B^n)$ is a convex functional of ${\overleftarrow P}_{A^n|B^{n-1}}$, and that the transmission cost constraint ${\cal P}_{0,n}^{fb}(\kappa)$ is a  convex  set (see \cite{charalambous-stavrou2012}). Hence,   $C_{0,n}^{fb}(\kappa)$ is a convex optimization problem. The fact that $C_{0,n}^{nfb}(\kappa)$ is a convex optimization problem is well known.  
\par Under the assumption that $\{(A_i, B_i): i=1, 2, \ldots, \}$ is jointly ergodic or $\frac{1}{n+1}\log \frac{{\overrightarrow P}_{B^n|A^{n}}(d{b}^n|a^{n})}{ P_{B^n}(d{b}^n)}$ is information stable \cite{dubrushin1958,tatikonda2000}, then the channel capacity with feedback encoding and without feedback encoding are given by 
\begin{align}
C^{fb} \tri \lim_{n \longrightarrow  \infty} C_{0,n}^{fb}, \hso C^{nfb} \tri  \lim_{n \longrightarrow \infty} C_{0,n}^{nfb}.
\end{align}
Under appropriate assumptions   then $C^{fb}(\kappa)$ and  $C^{nfb}(\kappa)$ corresponds to the channel capacity as well.

\subsection{Unit Memory Channel with Feedback}
A Unit Memory Channel (UMC) is defined by
\bea
{\overrightarrow P}_{B^n|A^{n}}(d{b}^n|a^{n})\tri
\otimes_{i=0}^{n}P_{B_i|B_{i},A_{i-1}}(d{b}_i|b_i,a_{i-1}).\label{umc}
\eea
\par For a UMC the following results are found in \cite{chen-berger2005}. If the channel is indecomposable  then 
\beae
C^{fb} &{=}&\lim_{n {\longrightarrow} \infty}\sup_{{\overleftarrow P}_{A^n|B^{n{-}1}}}{\frac{1}{n{+}1}}I(A^n{\rar} B^n)\nonumber\\
&{=}&{\lim_{n\longrightarrow \infty}}{\sup_{\{P(A_i|B_{i-1})\}_{i=0}^{n}}}\frac{1}{n+1}\sum_{i=0}^{n}I(A_i; B_i|B_{i-1})\nms\label{chen-burger_1} \\
&{=}&\sup_{P(A_i|B_{i{-}1})}\big\{H(B_i|B_{i{-}1}){-}H(B_{i}|A_i,B_{{i}{-}1})\big\}, \; \forall  i \nms
\label{chen-burger_2} 
\eeae
and the following hold.
\begin{enumerate}
\item $\{(A_i,B_i):i=0, 1,2,\ldots,\}$ is a first order stationary Markov process.
\item $\{B_i:i=0, 1,2,\ldots,\}$ is a first order stationary Markov process.
\end{enumerate}
\par Suppose the cost of  transmitting symbols is letter-by-letter  and time invariant, restricted to  $\gamma_{0,n}(a^n,b^{n-1})= \sum_{i=0}^{n}{c}(a_i,b_{i-1})$. Then by repeating 
the derivation in \cite{chen-berger2005}, if necessary, by introducing the Lagrangian functional associated with the average transmission cost constraint (and assuming existence of an interior point of the constraint), then 
 \medmuskip=-1mu
\thinmuskip=-1.6mu
\thickmuskip=-1mu
\beae
&&C^{fb}(\kappa)=\lim_{n\longrightarrow \infty}\sup_{ \{P_{A_i|B_{i-1}}\}_{i=0}^n:\frac{1}{n+1}\sum_{i=0}^{n}{\bf E}\big\{c(A_i,B_{i-1})\big\}\leq \kappa} \nonumber \\&& \hspace{1.4cm}\frac{1}{n+1} \Big\{ 
 \sum_{i=0}^n I(A_i; B_i|B_{i-1})\Big\}\nonumber \\
&=&\sup_{P_{A_i|B_{i{-}1}}:{\bf E}\big\{ c(A_i,B_{i{-}1})\big\}\leq \kappa}\Big\{H(B_i|B_{i{-}1}){-}H(B_{i}|A_i,B_{{i}{-}1})\Big\},\nonumber \\  \label{christos_1}  
\eeae
where for $\kappa \in  [0, \kappa_m]$ and $\kappa_m$ the maximum value of the cost constraint, both 1) and 2) remain valid. Moreover, $C^{fb}(\kappa)$ 
 is a concave, nondecreasing function of $\kappa \in [0, \kappa_m]$. A derivation of the fact that for a UMC with transmission cost $\gamma_{0,n}(a^n,b^{n-1})= \sum_{i=0}^{n}{c}(a_i,b_{i-1})$ the optimal channel input distribution has the conditional independence property $P_{A_i|A^{i-1}, B^{i-1}}=  P_{A_i|B_{i-1}}, \ i=0,1, \ldots,$   is also given in \cite{kourtellaris2014}.  This is necessary to obtain (\ref{chen-burger_2})-(\ref{christos_1}), and subsequently to show properties 1) and 2),   

Unfortunately, the closed form expression of the capacity achieving channel input distribution of the UMC (even when the input and output alphabets are binary)  is currently unknown. 

\subsection{The Binary State Symmetric Channel} 
\par In this section, we consider a special class  of the UMC channel, the BSSC, and discuss its physical meaning, and that  of the imposed transmission cost constraint. 
\par The $BSSC({\alpha}, {\beta})$ is a  unit memory
channel defined by 
 \medmuskip=-1mu
\thinmuskip=-1.6mu
\thickmuskip=-1mu
\beae
 P_{B_i|A_i,B_{i{-}1}}(b_i|a_i,b_{i{-}1}) {=} \bbordermatrix{~ & 0,0 &\hspace{-0.22cm} 0,1 &\hspace{-0.22cm} 1,0 &\hspace{-0.22cm} 1,1 \cr
                  0 & \alpha &\hspace{-0.22cm} \beta &\hspace{-0.22cm} 1{-}\beta &\hspace{-0.22cm} 1{-}\alpha \vspace{0.3cm}\cr 
                  1 & 1{-}\alpha &\hspace{-0.22cm} 1{-}\beta  &\hspace{-0.22cm} \beta &\hspace{-0.22cm} \alpha \cr}.\nms
                  \label{gench}
\eeae
\par Introduce the following change of variables called state of the channel, $s_i \tri a_i\oplus b_{i-1}, \ i\in{\mathbb N}^n$, where $\oplus$ denotes
the modulo2 addition. 
This transformation is one to one and onto, in the sense that,  for a fixed  channel input symbol value $a_i$ (respectively channel output symbol value $b_{i-1})$ then $s_i$ is uniquely determined  by the value of $b_{i-1}$ (respectively $a_i$) and vice-verse. 
Then the following equivalent representation of the BSSC is obtained.
\bea
P_{B_i|A_i,S_i}(b_i|a_i,s_i=0)  &=& \bbordermatrix{~ \cr
                  & \alpha & 1-\alpha \cr
                  & 1-\alpha & \alpha \cr}.   \label{state_zero}  \\
                  P_{B_i|A_i,S_i}(b_i|a_i,s_i=1)&=& \bbordermatrix{~ \cr
                  & \beta & 1-\beta \cr
                  & 1-\beta & \beta \cr}. \label{state_one}
                  \eea
\par The above transformation  highlights the symmetric form of the BSSC
for a fixed  state $s_i\in\{0,1\}$, which  decomposes (\ref{gench})  into  binary symmetric channels with crossover probabilities $(1-\alpha)$ and $(1-\beta)$, and motivates the name state symmetric channel.
\par The following notation will be used in the rest of the paper. 
\begin{itemize}
\item[1)] $BSSC(\alpha,\beta)$ denotes the BSSC with transition probabilities defined by (\ref{gench});
\item[2)] $BSC(1-\alpha)$ denotes the ``state zero" channel defined by (\ref{state_zero});
\item[3)] $BSC(1-\beta)$ denotes the ``state one" channel defined by (\ref{state_one}).
\end{itemize}

Next, we discuss the physical interpretation of the cost constraint. Consider $\alpha>\beta\geq 0.5$. 
Then the capacity of the state zero channel, $(1-H(\alpha))$, is greater than the capacity of the state one channel, $(1-H(\beta))$. With ``abuse" of terminology, we interpret the state zero channel as the ``good channel" and the state one channel, as the ``bad channel". It is then reasonable to consider a higher cost 
when employing the ``good channel" and a lower cost when employing the ``bad channel".  We quantify
this policy by assigning  the following binary pay-off to each of the channels.
\begin{equation}
c_i(a_i,b_{i-1}) = \left\{
  \begin{array}{l l}
    1 & \quad \text{if $a_i=b_{i-1}$, $(s_i=0)$}\\
    0 & \quad \text{if $a_i\neq b_{i-1}$, $(s_i=1)$ }
 \label{sds} \end{array}  \right. 
 \end{equation}
The letter-by-letter average transmission cost is given by
\beae
{\bf E}\{c(A_i,B_{i-1})\}
=P_{A_i,B_{i{-}1}}(0,0)+P_{A_i,B_{i{-}1}}(1,1)=P_{S_i}(0)
. \hspace{-0.15cm}\nms \label{qvcostc1} 
\eeae

\begin{remar} The binary form of the constraint does not downgrade the problem, since it can
be easily upgraded to more complex forms, without affecting the proposed methodology
(i.e. $(1-\delta)$, $\delta$, where $\delta=constant$).
Moreover, for $\beta>\alpha\geq 0.5$ we reverse the cost, while for $\alpha$ and/or $\beta$ are less
than $0.5$ we flip the respective channel input.
\end{remar}


\section{Explicit Expressions of Capacity of  BSSC with Feedback with \& without  Transmission Cost}\label{gencap}

\begin{figure}
\vspace{-0.4cm}
\includegraphics[scale=0.17]{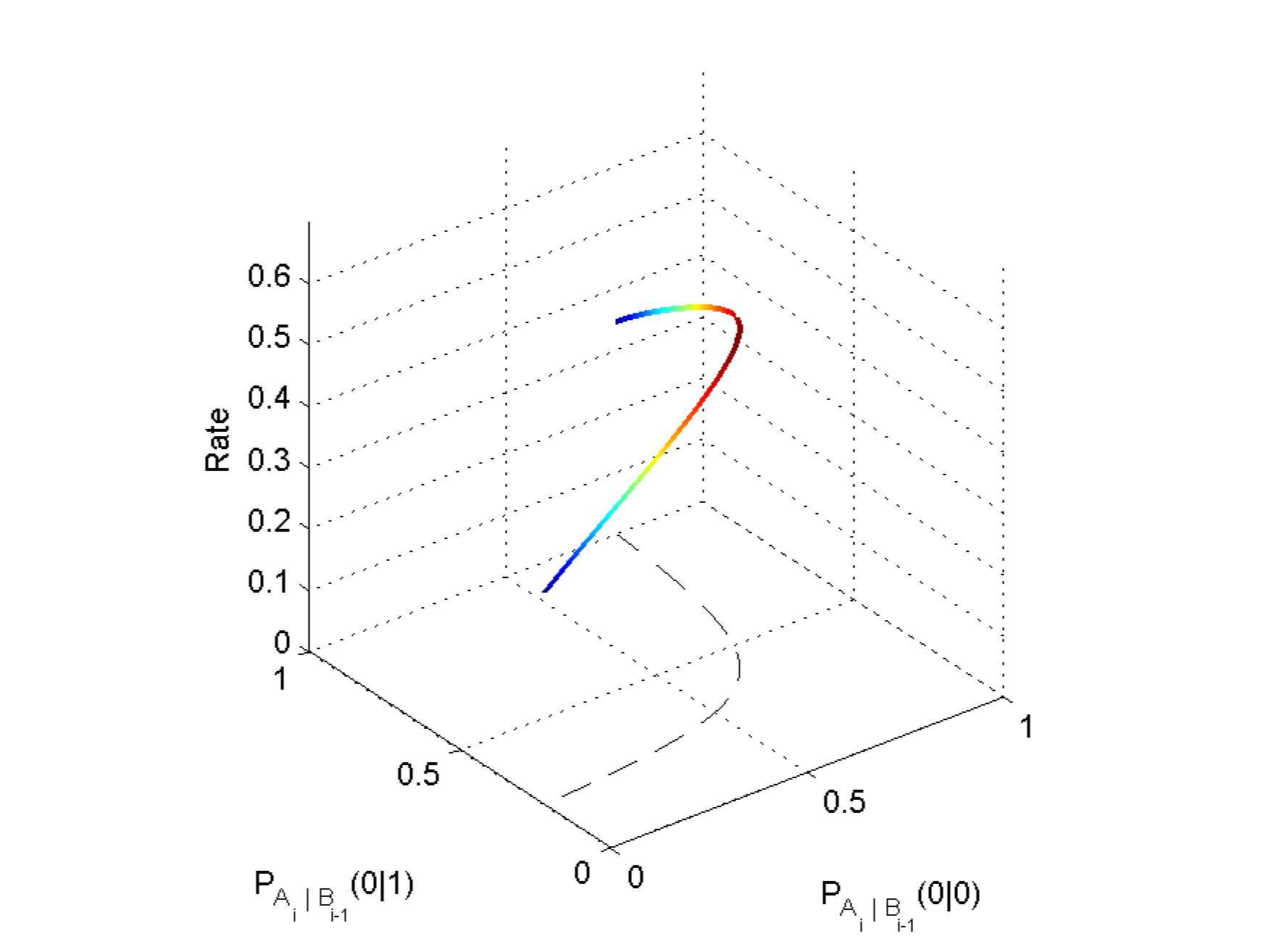}
\caption{Rate of the BSSC with feedback subject to transmission cost constraint (with equality) for ${\alpha}=0.92$, ${\beta}=0.79$ and $\kappa=0.71$.}
\label{con_rate}
\end{figure}

\par In this section we provide explicit (or closed form) expressions for the capacity of the BSSC with feedback,  with and without transmission cost.
\vspace{-0.3cm}  
\subsection{Capacity with Feedback and Tranmission Cost}
\vspace{-0.1cm}
\par Consider the case when there is feedback and transmission cost. Without loss of generality in the optimization problem (\ref{christos_1}) we replace the inequality  by an equality, because the optimization problem is convex, and hence the optimal channel input distributions occurs on the boundary of the constraint, provided $\kappa \in [0, \kappa_m]$, where $\kappa_m \in [0,1]$. 
\vspace{-0.1cm}
\par Hence, we discuss the problem with an equality transmission cost constraint defined by 
\beae
C^{fb}(\kappa)=\sup_{P_{A_i|B_{i-1}}:{\bf E}\{{c}(A_i,B_{i-1})\}=\kappa}I(A_i; B_i|B_{i-1}).
\nms\label{ccende}
\eeae
where $\kappa \in [0,1]$. In section \ref{sec_feed_nc} 
(Remark~\ref{equ_eq_ineq}), we  discuss the case when inequality is considered. 
\vspace{-0.1cm}
\par The constraint rate of the BSSC with feedback
is illustrated in Figure~\ref{con_rate}. The projection on the distribution plane,
denoted by the black dotted line, shows all possible pairs of input distributions that satisfy the transmission cost ${\bf E}\{{c}(A_i,B_{i-1})\}=\kappa$. 
\vspace{-0.1cm}
\par Next, we state the main theorem from which all other results  (no feedback, inequality transmission cost, no transmission cost) will be derived. 

\begin{theorem}(Capacity of $BSSC(\alpha,\beta)$ with feedback \& transmission cost)\\
\label{cftc}
The capacity of $BSSC(\alpha,\beta)$ with feedback and transmission cost ${\mathbb E}\{{c(A_i,B_{i-1})}\}=\kappa, \ \kappa \in [0,1]$ is
given by
\begin{IEEEeqnarray}{l}
C^{fb}(\kappa)=H(\lambda){-}\kappa H({\alpha}){-}(1{-}\kappa)H({\beta}).\IEEEeqnarraynumspace
 \end{IEEEeqnarray}
where $\lambda=\alpha\kappa+(1-\kappa)(1-\beta)$. \\
The optimal input and output distributions are given by
\vspace{-0.3cm}
 \medmuskip=-1mu
\thinmuskip=-1.6mu
\thickmuskip=-1mu
\bea
P^{*}_{A_i|B_{i-1}}(a_i|b_{i-1}) = \bbordermatrix{~ \cr
                  & \kappa & 1-\kappa \cr
                   & 1-\kappa & \kappa \cr},\vspace{-0.8cm}
\label{ccon_inp_dis}\\ 
P^{*}_{B_i|B_{i-1}}(b_i|b_{i-1}) = \bbordermatrix{~\cr
                  & \lambda & 1-\lambda \cr
                   & 1-\lambda & \lambda \cr}.
\label{ccon_out_dis}
\eea
\end{theorem}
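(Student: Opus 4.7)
The optimization (\ref{ccende}) calls for maximizing $I(A_i;B_i|B_{i-1}) = H(B_i|B_{i-1}) - H(B_i|A_i,B_{i-1})$ under the constraint $P(A_i = B_{i-1}) = \kappa$. The state representation (\ref{state_zero})--(\ref{state_one}) immediately handles the second term: given $(A_i,B_{i-1})$, the conditional entropy of $B_i$ is $H(\alpha)$ when the state $S_i = A_i \oplus B_{i-1}$ equals $0$ and $H(\beta)$ when $S_i=1$. Since the cost constraint (\ref{qvcostc1}) pins $P_{S_i}(0) = \kappa$, I obtain
\[
H(B_i|A_i,B_{i-1}) = \kappa H(\alpha) + (1-\kappa) H(\beta),
\]
which is the same for every admissible input. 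The problem therefore reduces to maximizing the output entropy $H(B_i|B_{i-1})$ subject only to $P(S_i=0)=\kappa$.

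For the converse, I would write $B_i = A_i \oplus Z_i$, where the noise $Z_i$ is, conditional on $S_i$, Bernoulli$(1-\alpha)$ if $S_i=0$ and Bernoulli$(1-\beta)$ if $S_i=1$, and is independent of $(A_i,B_{i-1})$ given $S_i$ by the symmetry of each BSC layer. Combined with $A_i = B_{i-1} \oplus S_i$, this yields the key identity $B_i \oplus B_{i-1} = S_i \oplus Z_i$. A direct calculation then gives
\[
P(B_i \oplus B_{i-1} = 0) = \kappa \alpha + (1-\kappa)(1-\beta) = \lambda,
\]
a quantity determined solely by the constraint and the channel parameters. Hence
\[
H(B_i|B_{i-1}) = H(B_i \oplus B_{i-1}|B_{i-1}) \leq H(B_i \oplus B_{i-1}) = H(\lambda),
\]
establishing $C^{fb}(\kappa) \leq H(\lambda) - \kappa H(\alpha) - (1-\kappa) H(\beta)$.

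For the forward direction I would verify that the symmetric choice (\ref{ccon_inp_dis}) saturates the bound: it induces the symmetric transition kernel (\ref{ccon_out_dis}) with crossover $\lambda$, whose stationary distribution on $B$ is uniform; under this uniform law the cost $P(A_i = B_{i-1})$ evaluates to $\kappa$ (closing the loop self-consistently); and by symmetry $B_i \oplus B_{i-1}$ is independent of $B_{i-1}$, so the upper bound $H(\lambda)$ is attained. Each of these items reduces to a short algebraic check using $\lambda = \kappa\alpha + (1-\kappa)(1-\beta)$. The only real subtlety is the circular coupling between the stationary distribution $P_{B_{i-1}}$ and the input kernel $P_{A_i|B_{i-1}}$ in the single-letter optimization; the xor identity $B_i \oplus B_{i-1} = S_i \oplus Z_i$ is precisely what sidesteps it, since the marginal of $B_i \oplus B_{i-1}$ is fixed by the cost constraint alone without ever invoking the stationary distribution of $B$.
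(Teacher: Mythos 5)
Your proposal is correct, and it reaches the theorem by a route that differs from the paper's in the decisive step. You share the paper's opening reduction: the cost constraint fixes $H(B_i|A_i,B_{i-1})=\kappa H(\alpha)+(1-\kappa)H(\beta)$ (the paper's (\ref{cha_ent})), so the problem becomes maximizing $H(B_i|B_{i-1})$, and your identity $P(B_i\oplus B_{i-1}=0)=\lambda$ is exactly the paper's relation (\ref{ch4exp1}) in disguise. Where you diverge is in how that maximization is finished: the paper uses (\ref{ch4exp1}) to eliminate variables, writes $H(B_i|B_{i-1})$ as a function of the single parameter $q_b=P_{B_i|B_{i-1}}(1|1)$, and finds the stationary point $q_b=\lambda$ by differentiation, which as written identifies a critical point of a parametrized family; you instead write $B_i=A_i\oplus Z_i$ with $Z_i$ conditionally independent of $(A_i,B_{i-1})$ given $S_i$ (a property that does hold for the kernel (\ref{gench})), deduce $B_i\oplus B_{i-1}=S_i\oplus Z_i$ so that the cost constraint alone pins $P(B_i\oplus B_{i-1}=0)=\lambda$, and conclude $H(B_i|B_{i-1})=H(B_i\oplus B_{i-1}|B_{i-1})\le H(\lambda)$, with equality checked directly for the symmetric kernel (\ref{ccon_inp_dis}), which meets the cost for any law of $B_{i-1}$ and makes $B_i\oplus B_{i-1}$ independent of $B_{i-1}$. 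Your bound-plus-achievability argument buys a calculus-free converse that never touches the stationary output marginal, neatly avoiding the circular coupling through $P_{B_{i-1}}$; the paper's parametric computation buys the explicit expressions (\ref{ch4exp3})--(\ref{ch4exp4}) for the admissible output kernels, which make the structure of the feasible set visible. One item you leave implicit is uniqueness of the optimizer, which the theorem's phrasing suggests: equality in your conditioning step forces $P_{B_i|B_{i-1}}(b|b)=\lambda$ for every $b$ in the support of $B_{i-1}$, and (for $\alpha+\beta\neq 1$) this backs out $P_{A_i|B_{i-1}}(b|b)=\kappa$, so a sentence to that effect would close the argument completely.
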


\begin{figure}
\begin{center}
\includegraphics[scale=0.23]{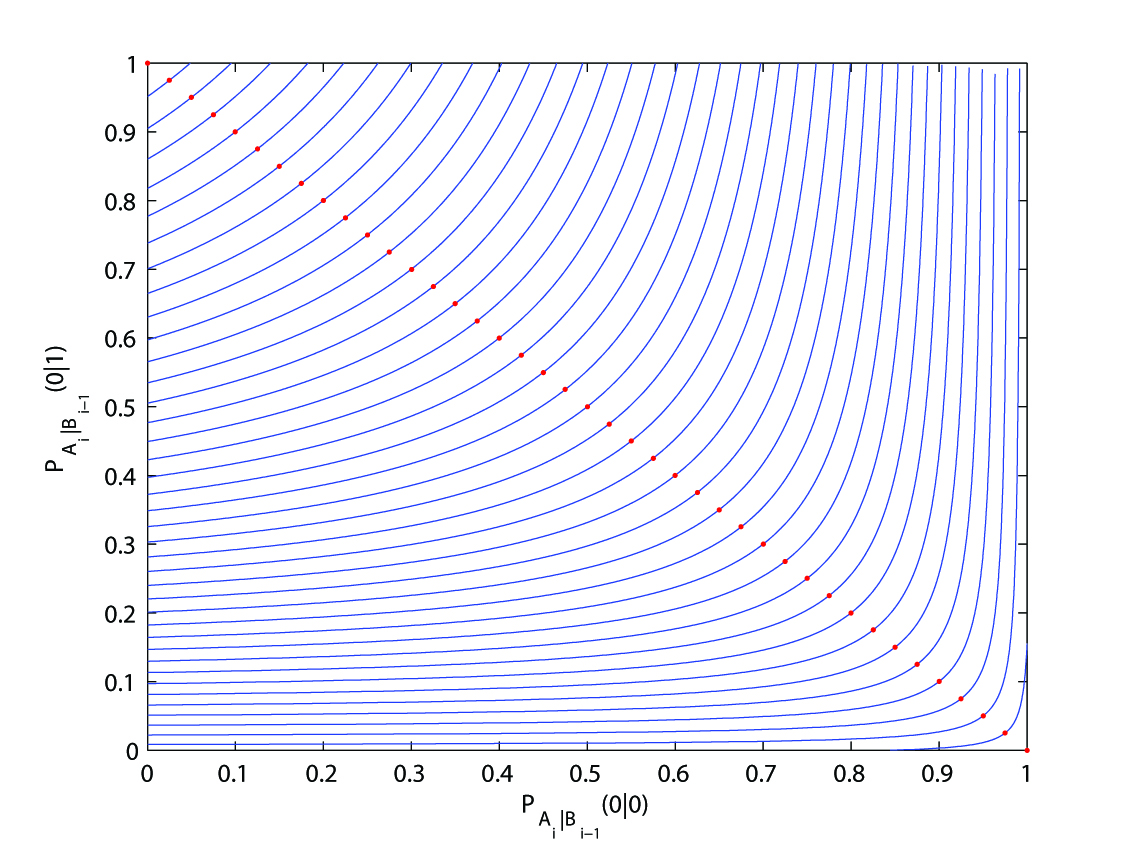}
\caption{Input distributions and the respective capacity for $\kappa=0,0.025,0.05\ldots$,1}
\label{cap_k}
\end{center}
\end{figure}

\begin{proof} We outline the proof. The second term of the RHS of (\ref{ccende}) is
fixed by the cost constraint, and it is given by
\bea
H(B_i|B_{i-1},A_i)=\kappa H({\alpha})+(1-\kappa)H({\beta}). \label{cha_ent}
\eea
The conditional distribution of the output is given by
\bea
P_{B_i|B_{i-1}}=\sum_{a_i \in {\cal A}}P_{B_i|A_i,B_{i-1}}P_{A_i|B_{i-1}}.\label{reoudix}
\eea
Then, by manipulating (\ref{qvcostc1}) and (\ref{reoudix}), we obtain
\beae P_{B_i|B_{i-1}}(0|0)P_{B_i}(0){+}P_{B_i|B_{i{-}1}}(1|1)(1{-}P_{B_i}(0))=\lambda, \nms \label{ch4exp1}
\eeae
where $\lambda=\alpha\kappa+(1-\beta)(1-\kappa)$.
Using (\ref{ch4exp1}) we obtain the following expressions for $P_{B_i|B_{i-1}}(0|0)$
and $P_{B_i}(0)$, as functions of $P_{B_i|B_{i-1}}(1|1)$, $\alpha,\beta,\kappa$.
\beae
P_{B_i}(0)&=&\frac{1+\lambda-2P_{B_i|B_{i-1}}(1|1)}{2(1-P_{B_i|B_{i-1}}(1|1))},\label{ch4exp3}\\
P_{B_i|B_{i-1}}(0|0)&=&\frac{2\lambda-(1+\lambda)P_{B_i|B_{i-1}}(1|1)}{1+\lambda-2P_{B_i|B_{i-1}}(1|1)}. \label{ch4exp4}
\eeae
To simplify the notation, we set $q_b\tri P_{B_i|B_{i-1}}(1|1)$, and then calculate $H(B_i|B_{i-1})$ as a function of $\lambda$ and $q_b$. Maximizing $H(B_i|B_{i-1})$ with respect to $q_b$, yields
\bea
&& \frac{1-\lambda}{2(q_b-1)^2}
\left(\log\left(\frac{2\lambda-(1+\lambda)q_b}{1+\lambda-2q_b}\right)-\log q_b\right)=0\nonumber\\
&\Rightarrow& \frac{1-\lambda}{2(q_b-1)^2}\log\left(\frac{2\lambda-(1+\lambda)q_b}{(1+\lambda-2q_b)q_b}\right)=0,
\eea
hence,  $q_b=\lambda$ or  
$q_b=1$ (the trivial solution). 
By substituting the non-trivial solution $q_b=\lambda$ into the single letter expression of the constraint capacity we obtain (\ref{ccon_inp_dis}), (\ref{ccon_out_dis}). 
Moreover, since the transition matrix (\ref{ccon_out_dis}) is doubly stochastic, the distribution of the output symbol, $B_i$, at each time instant $i\in\mathbb{N}$, is given by $P_{B_i}(0)=P_{B_i}(1)=0.5, \ i\in\mathbb{N}$.\\

\end{proof}

\par The possible pairs of input distributions (blue curves) and the 
pairs of input distributions that achieve the capacity (red points), for various values of $\kappa$, are illustrated in Figure~\ref{cap_k}.
The input distribution that achieves the capacity satisfies the equation $P_{A_i|B_{i-1}}(0|0)+P_{A_i|B_{i-1}}(0|1)=1$. This can be shown  by substituting
(\ref{ccon_out_dis}) and $P_{B_i}(0)=P_{B_i}(1)=0.5$ in
(\ref{ch4exp1}).

\subsection{Capacity with Feedback without Transmission Cost}\label{sec_feed_nc}
\par When there is no transmission cost constraint   any channel input distribution pair is permissible. An example for a possible rate of the BSSC with feedback without transmission cost is illustrated in Figure~\ref{uncon_rate}. 
\par Next, we derive the analogue of Theorem~\ref{cftc}, when there is no transmission cost.  For this case, the expression of the capacity  highlights the optimal time sharing between the two states.

\begin{theorem}
(Capacity of $BSSC(\alpha,\beta)$ with feedback without transmision cost)\\
\label{cfntc}
The capacity of the $BSSC(\alpha,\beta)$ with feedback without transmission cost is
given by
\begin{IEEEeqnarray}{l}
C^{fb}=H(\lambda^*){-}\kappa^* H({\alpha}){-}(1{-}\kappa^*)H({\beta}),\label{capunfinal}
\end{IEEEeqnarray}
where 
\beae
\lambda^*=&\alpha\kappa^*+(1-\kappa^*)(1-\beta), \\
{\kappa}^*=&\frac{{\beta}(1+2^{\frac{H({\beta})-H({\alpha})}{{\alpha}+{\beta}-1}})-1}
{{({\alpha}+{\beta}-1)}(1+2^{\frac{H({\beta})-H({\alpha})}{{\alpha}+{\beta}-1}})}.\label{oiduc}
\eeae

\begin{figure}
\begin{center}
\vspace{-0.4cm}
\includegraphics[scale=0.125]{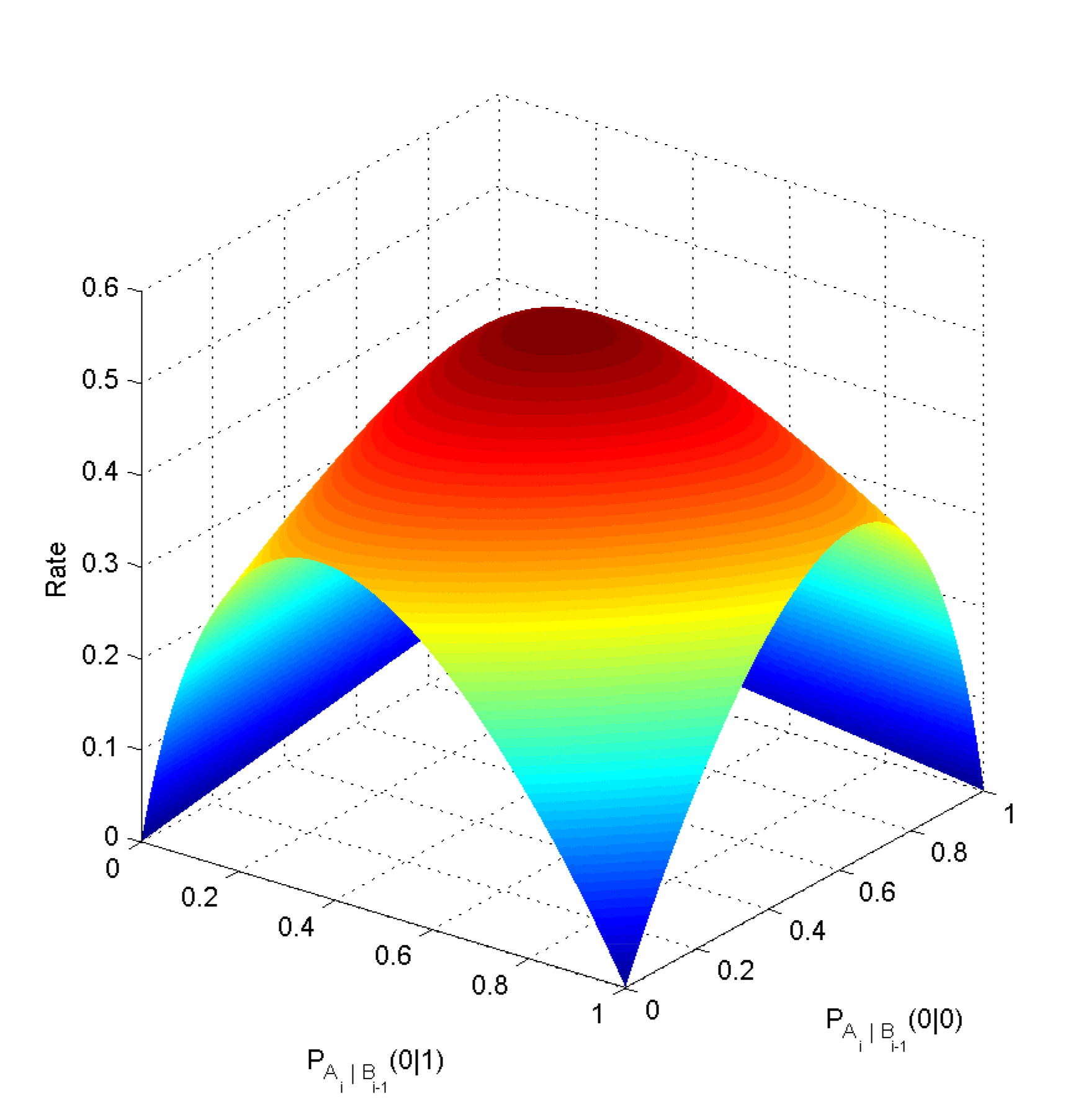}
\caption{Rate of the BSSC with feedback, for $ {\alpha}=0.92$ and ${\beta}=0.79$.}
\label{uncon_rate}
\end{center}
\end{figure}

The optimal input and output distributions are given by
\vspace{-0.5cm}
\bea
P^{*}_{A_i|B_{i-1}}(a_i|b_{i-1}) = \bbordermatrix{~  \cr
                   & \kappa^* & 1-\kappa^* \cr
                   & 1-\kappa^* & \kappa^* \cr}.
\label{con_inp_dis}
\eea
\vspace{-0.5cm}
\bea
P^{*}_{B_i|B_{i-1}}(b_i|b_{i-1}) = \bbordermatrix{~  \cr
                   & \lambda^* & 1-\lambda^* \cr
                   & 1-\lambda^* & \lambda^* \cr}.
\label{con_out_dis}
\eea
\end{theorem}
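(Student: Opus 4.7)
The plan is to exploit Theorem~\ref{cftc} and reduce the unconstrained optimization to a one-dimensional search over the cost parameter $\kappa$. Every channel input distribution $P_{A_i|B_{i-1}}$ induces some expected cost ${\bf E}\{c(A_i,B_{i-1})\} = \kappa \in [0,1]$, and Theorem~\ref{cftc} already identifies the maximum rate achievable among all distributions that realize this particular cost. The family $\{P_{A_i|B_{i-1}}: {\bf E}\{c\} = \kappa\}$ partitions the set of all admissible input distributions as $\kappa$ ranges over $[0,1]$, so the unconstrained supremum commutes with first fixing $\kappa$ and then maximizing. With $\lambda(\kappa) = \alpha\kappa + (1-\kappa)(1-\beta)$, this gives
\[ C^{fb} = \sup_{\kappa \in [0,1]} \bigl\{H(\lambda(\kappa)) - \kappa H(\alpha) - (1-\kappa)H(\beta)\bigr\}. \]

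Next I would differentiate. Since $d\lambda/d\kappa = \alpha+\beta-1$ is constant and $dH(\lambda)/d\lambda = \log\frac{1-\lambda}{\lambda}$, the first-order optimality condition collapses to
\[ (\alpha+\beta-1)\log\frac{1-\lambda}{\lambda} = H(\alpha) - H(\beta). \]
Exponentiating and solving produces $\lambda^{*} = 1/\bigl(1 + 2^{(H(\alpha)-H(\beta))/(\alpha+\beta-1)}\bigr)$. Inverting the affine relation $\lambda = (1-\beta)+\kappa(\alpha+\beta-1)$ then yields $\kappa^{*}$, which after rewriting the exponent with opposite sign matches the closed form (\ref{oiduc}).

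To certify this stationary point is a maximum I would invoke concavity of $C^{fb}(\kappa)$ in $\kappa$: $H(\lambda(\kappa))$ is concave, being the composition of the concave binary entropy with an affine map, while $\kappa H(\alpha) + (1-\kappa)H(\beta)$ is linear and hence does not disturb concavity. Under the standing convention $\alpha > \beta \geq 1/2$ we have $\alpha + \beta - 1 > 0$, so the critical point lies in the interior of $[0,1]$ and is the unique global maximizer. Substituting $\kappa^{*}$ into (\ref{ccon_inp_dis}) and (\ref{ccon_out_dis}) of Theorem~\ref{cftc}, together with the induced $\lambda^{*}$, recovers the optimal input and output distributions claimed in (\ref{con_inp_dis}) and (\ref{con_out_dis}).

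The main obstacle I foresee is not the calculus itself but the algebraic cleanup: the implicit equation for $\lambda^{*}$ must be unwound through the affine change of variable, and the resulting fraction massaged into precisely the form displayed in (\ref{oiduc}); some care with sign conventions in the exponent is required. A secondary point worth verifying explicitly is that $\kappa^{*} \in (0,1)$ under the hypotheses on $(\alpha,\beta)$, which rules out the trivial boundary maximizers $\kappa = 0$ (pure use of the bad channel) and $\kappa = 1$ (pure use of the good channel) and confirms that genuine time sharing is optimal.
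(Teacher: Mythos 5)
Your proposal is correct and follows essentially the same route as the paper's own proof, namely the double maximization $C^{fb}=\max_{\kappa\in[0,1]}C^{fb}(\kappa)$ built on Theorem~\ref{cftc}, with the stationarity condition $(\alpha+\beta-1)\log\frac{1-\lambda}{\lambda}=H(\alpha)-H(\beta)$ and concavity in $\kappa$ yielding exactly the closed form (\ref{oiduc}) and the distributions (\ref{con_inp_dis})--(\ref{con_out_dis}). You in fact supply the calculus and the interiority check of $\kappa^*$ that the paper leaves implicit.
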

\begin{proof}
The derivation of the results can be done by following that of Theorem~\ref{cftc}, with   the maximization of directed information taken 
over all possible channel input distributions. Alternatively, by utilizing the statements of Theorem~\ref{cftc}, the capacity without transmission cost corresponds to   the double maximization over the channel input
distributions that satisfy the average cost constraint, and over all
possible values of  $\kappa \in [0,1]$, via
\bea
C^{fb}=\max_{\kappa\in[0,1]}C^{fb}(\kappa)\label{uncon_capac}
\eea
Using (\ref{uncon_capac}) then  (\ref{oiduc}) is obtained; the rest of the statements are easily shown. 
\end{proof}

\par The result of the unconstraint capacity with feedback is equivalent to \cite{asnani13j}. However, since our derivations are different,  the capacity formulae given here highlights the optimal time sharing, $k^*$, among the two binary symmetric channels.
\begin{remar}\label{equ_eq_ineq}
For problem (\ref{christos_1}) with inequality constraint, in view of its convexity, and the fact that $C^{fb}(\kappa)$ as a function of $\kappa$ is  concave and  nondecreasing in $\kappa \in [0, \kappa_m]$, then $\kappa_m ={\kappa^*}$, and  the solution occurs on the boundary of the cost constraint. 
\end{remar}

\section{Explicit Expressions of Capacity of BSSC without Feedback with \& without Transmission Cost}\label{ch4bsscnf}
In this section we show that for BSSC feedback does not increase capacity, and then we derive the analogue of Theorem~\ref{cftc} and Theorem~\ref{cfntc}. 

\begin{theorem}\label{ch4spnf}
{\bf (a):} For the BSSC$(\alpha,\beta)$ with transmission cost,   the first-order Markovian input distribution
$\{P^*_{A_i|A_{i-1}}: i=0,1, \ldots\}$ given by
\bea
P^{*}_{A_i|A_{i-1}}(a_i|a_{i-1}) = \bbordermatrix{~ &  &  \cr
                   & \dfrac{1-\kappa-\gamma}{1-2\gamma} & \dfrac{\kappa-\gamma}{1-2\gamma}   \cr
                   & \dfrac{\kappa-\gamma}{1-2\gamma}   & \dfrac{1-\kappa-\gamma}{1-2\gamma} \cr},
                  \label{opima} \\ \nonumber
\eea
where $\gamma={\alpha}{\kappa}+{\beta}({1-\kappa})$, 
induces the optimal channel input and channel output distributions $P^*_{A_i|B_{i-1}}$ and $P^*_{B_i|B_{i-1}}, P^{*}_{B_{i-1}}$, respectively,  of the BSSC$(\alpha,\beta)$ with feedback and transmission cost.\\
{\bf (b):} For the BSSC$(\alpha,\beta)$ without transmission cost {\bf (a)} holds with  $\kappa=\kappa^*$ and $\gamma=\gamma^*$.\\
{\bf (c):}  The capacity the BSSC without feedback and transmission cost is given by
\beae
C^{nfb}= \sup_{{P}_{A_i|A_{i-1}}}I(A_i;B_i|B_{i-1})= C^{fb},
\eeae
and similarly, if there is a transmission cost.

\end{theorem}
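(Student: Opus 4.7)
Part (b) is an instance of (a) at $(\kappa^*,\gamma^*)$, and (c) follows from (a) together with the trivial inequality $C^{nfb}\le C^{fb}$, so my plan concentrates on (a): insert the candidate Markov input (\ref{opima}) into the channel, compute the stationary joint $(A_i,B_i)$ explicitly, and read off $P^*_{A_i|B_{i-1}}$, $P^*_{B_i|B_{i-1}}$, $P^*_{B_{i-1}}$ to verify they agree with (\ref{ccon_inp_dis})--(\ref{ccon_out_dis}).

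The transition matrix in (\ref{opima}) is doubly stochastic with the $a\!\leftrightarrow\!1{-}a$ symmetry, so its unique stationary input marginal is $P^*_{A_i}(0)=1/2$. Paired with the channel (\ref{gench}), the process $(A_i,B_i)$ is a time-homogeneous Markov chain on four states with kernel $P^*_{A_{i+1}|A_i}\,P_{B_{i+1}|A_{i+1},B_i}$. The joint symmetry $(a,b)\!\leftrightarrow\!(1{-}a,1{-}b)$, inherited from both factors, forces $\pi(0,0)=\pi(1,1)$ and $\pi(0,1)=\pi(1,0)$, reducing the balance equations to a single scalar relation; the specific choice $\gamma=\alpha\kappa+\beta(1-\kappa)$ in (\ref{opima}) is what makes the nontrivial root give $P^*_{A_i,B_{i-1}}(0,0)=\kappa/2$ and $P^*_{B_i,B_{i-1}}(0,0)=\lambda/2$. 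Dividing by the uniform marginals of $A_{i-1}$ and $B_{i-1}$ then reproduces (\ref{ccon_inp_dis})--(\ref{ccon_out_dis}) exactly, and the cost ${\bf E}\{c(A_i,B_{i-1})\}=P^*_{A_i,B_{i-1}}(0,0)+P^*_{A_i,B_{i-1}}(1,1)=\kappa$ is satisfied automatically.

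For (c), having matched the stationary on $(A_i,B_{i-1},B_i)$, the output $\{B_i\}$ under the no-feedback scheme inherits the first-order Markov property (\ref{ccon_out_dis}). Combined with the channel identity $H(B_i|A^n,B^{i-1})=H(B_i|A_i,B_{i-1})$, this single-letterizes the no-feedback rate:
\begin{align*}
\tfrac{1}{n+1}I(A^n;B^n)
&=\tfrac{1}{n+1}\sum_{i=0}^{n}\bigl[H(B_i|B^{i-1})-H(B_i|A_i,B_{i-1})\bigr]\\
&\longrightarrow H(\lambda)-\kappa H(\alpha)-(1-\kappa)H(\beta)=C^{fb}(\kappa),
\end{align*}
so $C^{nfb}(\kappa)\ge C^{fb}(\kappa)$, and combined with $C^{nfb}(\kappa)\le C^{fb}(\kappa)$ equality follows; the same argument with $(\kappa^*,\gamma^*)$ gives the unconstrained case.

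The main obstacle is the algebraic identification in part (a): although the Markov-input transition kernel differs from the feedback kernel $P^*_{A_{i+1}|B_i}$, both must induce the same stationary joint on $(A_i,B_i)$, and the fraction $(1-\kappa-\gamma)/(1-2\gamma)$ with $\gamma=\alpha\kappa+\beta(1-\kappa)$ is precisely the entry that makes the 4-state balance equations collapse onto the feedback-optimal joint. Carrying out this matching is the bulk of the proof; everything else—the inherited Markov property of $\{B_i\}$, the cost constraint, and the conclusion in (c)—follows mechanically once it is in place.
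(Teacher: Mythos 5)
Your plan for parts (a) and (b) is correct and is in substance the same as the paper's: the paper obtains (\ref{opima}) by solving the consistency identity $P^*_{A_i|B_{i-1}}=\sum_{a_{i-1}}P_{A_i|A_{i-1}}P_{A_{i-1}|B_{i-1}}$, while you verify that (\ref{opima}) leaves the symmetric joint law with $P_{A_i,B_i}(0,0)=\gamma/2$ invariant and hence reproduces $P_{A_i,B_{i-1}}(0,0)=\kappa/2$, $P_{B_i,B_{i-1}}(0,0)=\lambda/2$, uniform marginals, and cost $\kappa$; those computations check out and recover (\ref{ccon_inp_dis})--(\ref{ccon_out_dis}).

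The genuine gap is in part (c). Your key step, that under the no-feedback Markov input the output $\{B_i\}$ ``inherits the first-order Markov property (\ref{ccon_out_dis})'', is false, and the displayed single-letterization of $\frac{1}{n+1}I(A^n;B^n)$ collapses with it. Matching the stationary pair marginals $(A_i,B_{i-1})$ and $(B_i,B_{i-1})$ does not make $\{B_i\}$ Markov: under feedback $A_i$ depends on $B_{i-1}$ alone, so $B_i - B_{i-1} - B^{i-2}$ holds, but under the Markov input $A_i$ is tied to $A_{i-1}$, which is not conditionally independent of $B_{i-2}$ given $B_{i-1}$. Concretely, $P(A_{i-1}{=}0\,|\,B_{i-1}{=}0,B_{i-2}{=}0)=\alpha\kappa/\lambda$ whereas $P(A_{i-1}{=}0\,|\,B_{i-1}{=}0,B_{i-2}{=}1)=\beta(1-\kappa)/(1-\lambda)$ (about $0.915$ vs.\ $0.801$ at $\alpha=0.92,\beta=0.79,\kappa=0.71$); pushing this through (\ref{opima}) and the channel (\ref{gench}) gives $P(B_i{=}0\,|\,B_{i-1}{=}0,B_{i-2}{=}0)\approx 0.727$ versus $\approx 0.683$ for $B_{i-2}=1$, neither equal to $\lambda\approx 0.714$. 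Hence $H(B_i|B^{i-1})<H(B_i|B_{i-1})=H(\lambda)$ strictly, the limit you assert does not hold for this input, and your lower bound $C^{nfb}(\kappa)\ge C^{fb}(\kappa)$ is not established. Note that the paper does not argue through the multi-letter limit at all: it takes the single-letter characterization $C^{nfb}=\sup_{P_{A_i|A_{i-1}}}I(A_i;B_i|B_{i-1})$ as its starting point and concludes (c) from the matching of the one-step conditionals, so your route diverges from the paper exactly where the gap appears. To close (c) along your multi-letter route you would need a genuinely different ingredient (for instance the output-distribution simulation/converse argument used for the POST channel by Asnani, Permuter and Weissman), since equality of the $(B_{i-1},B_i)$ marginals alone does not control $H(B_i|B^{i-1})$.
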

\begin{proof} To prove the claims it suffices to show that a Markovian input distribution
achieves the capacity achieving channel input distribution with feedback. Consider the following identities.
\beae
P^*_{A_i|B_{i{-}1}}&=&\sum_{A_{i{-}1}}\hspace{-0.1cm}P_{A_{i}|A_{i{-}1},B_{i{-}1}}P_{A_{i{-}1}|{B_{i{-}1}}}\nonumber\\
&\sr{?}{=}&\sum_{A_{i{-}1}}\hspace{-0.1cm}P_{A_{i}|A_{i{-}1}}P_{A_{i{-}1}|{B_{i{-}1}}}\nonumber\\
&=&\sum_{A_{i{-}1}}\hspace{-0.15cm}\frac{P_{A_{i}|A_{i{-}1}}}{P_{B_{i{-}1}}}\sum_{B_{i{-}2}}\hspace{-0.12cm}{P_{B_{i{-}1}|A_{i{-}1},B_{i{-}2}}}
{P_{A_{i{-}1}|B_{i{-}2}}P_{B_{i{-}2}}}.\nonumber\\ \nms \label{idnfq}
\eeae
Thus, we search for an input distribution without feedback $P_{A_{i}|A_{i{-}1},B_{i{-}1}}=P_{A_{i}|A_{i{-}1}}$ that satisfies (\ref{idnfq}). Solving iteratively this system of equations yields the values of the optimal input distribution without feedback given by (\ref{opima}). Since $P^*_{A_i|A_{i-1}}$ given by (\ref{opima}) induces $P^*_{A_i|B_{i-1}}$, then the input distribution without feedback also induces the optimal output distribution $P^*_{B_i|B_{i-1}}=\sum_{A_i}P^*_{B_i|B_{i-1},A_i}P^*_{A_i|B_{i-1}}$ and joint processes $P^*_{A_i, B_i|A_{i-1},B_{i-1}}=P^*_{B_i|B_{i-1},A_i}P^*_{A_i|B_{i-1}}$. This is sufficient to conclude {\bf (c)}. 
\end{proof}

\section{Conclussions}
In this paper we formulate the capacity of the UMC and the BSSC and provide the  explicit expressions of the capacity and corresponding achieving channel input distribution of the $BSSC$, with and without feedback and with and without transmission  cost. 

%

\vspace{-0.1cm}
\bibliographystyle{IEEEtran}
\bibliography{Bibliography}
\end{document}